\renewcommand{\Pr}[1]{\mathbb{P}\!\left(#1\right)}
\newcommand{\Prob}[1]{\mbox{Pr}\!\left(#1\right)}
\newcommand{\Var}[1]{\mbox{Var}\!\left(#1\right)}
\newcommand{\x}{{\mathbf{x}}} 
 \newcommand{\A}{{\mathbf{A}}}
\newcommand{\p}{{\mathbf{p}}} 
\newcommand{\R}{{\mathbf{R}}} \newcommand{\br}{{\mathbf{r}}}
\newcommand{\U}{{\mathbf{U}}} 
\newcommand{\X}{{\mathbf{X}}} 
\newcommand{\Z}{{\mathbf{Z}}}
\newcommand{\1}{{\mathbf{1}}}
\newtheorem{theorem}{Theorem}
\newtheorem{definition}[theorem]{Definition}
\newtheorem{remark}[theorem]{Remark}
\newenvironment{proof}[1][Proof]{\noindent\textbf{#1.} }{\ \rule{0.5em}{0.5em}}
\begin{document}

\title{Sensitivity Analysis for matched pair analysis of binary data: From worst case to average case analysis}

\author{Raiden B. Hasegawa\textsf{\footnote{\textit{Address for correspondence}: Department of
Statistics, The Wharton School, University of Pennsylvania, Philadelphia, PA
19104-6340 US, Email: raiden@wharton.upenn.edu.}} \ and \  Dylan S. Small}

\date{May 16, 2018}

\maketitle

\begin{abstract} In matched observational studies where treatment assignment is
not randomized, sensitivity analysis helps investigators determine how sensitive
their estimated treatment effect is to some unmeasured confounder. The standard
approach calibrates the sensitivity analysis according to the worst case bias in
a pair.  This approach will result in a conservative sensitivity analysis if the
worst case bias does not hold in every pair. In this paper, we show that for binary
data, the standard approach can be calibrated in terms of the average bias in a
pair rather than worst case bias.  When the worst case bias and average bias differ, the average bias
interpretation results in a less conservative sensitivity analysis and more
power. In many studies, the average case calibration may also carry a more
natural interpretation than the worst case calibration and may also allow researchers to incorporate additional data to establish an empirical basis with which to calibrate a sensitivity analysis. We illustrate this with a study of the effects of cellphone use on the incidence of automobile accidents. Finally, we extend the average case calibration to the sensitivity analysis of confidence intervals for attributable effects.  \newline
\textbf{keywords:} attributable effects; binary data; causal inference; cellphone; 
majorization; sensitivity analysis; traffic collision.
\end{abstract}

\label{firstpage}
\thispagestyle{fancy}

\section{INTRODUCTION}\label{sec:intro}

\subsection{Sensitivity analysis as causal evidence}\label{subsec:evidence}

 In matched-pair observational studies, causal conclusions based on usual inferential methods (e.g., McNemar's test for binary data) rest on the assumption that matching on observed covariates has the same effect as randomization (i.e., that there are no unmeasured confounders). In other words, it is assumed that there are no unobserved covariates relevant to both treatment assignment and outcome. A sensitivity analysis assesses the sensitivity of results to violations of this assumption. \cite{cornfield1959} introduced a model for sensitivity analysis that was a major conceptual advance in the field of observational studies. A modern approach to sensitivity analysis is introduced in \cite{rosenbaum1987}; Rosenbaum's approach builds on Cornfield's model (\cite{cornfield1959}) but incorporates uncertainty due to sampling variance. There are other contemporary sensitivity analysis models, see for example \cite{mccandless2007} for a Bayesian approach, but we restrict our focus to Rosenbaum's approach. Rosenbaum's sensitivity analysis yields an upper limit on the magnitude of bias to which the result of the researcher's test of no treatment effect is insensitive for a given significance level $\alpha$. More specifically, \cite{rosenbaum1987} derives bounds on the p-value of this test given an upper bound, $\Gamma$, on the odds ratio of treatment assignment for a pair of subjects matched on observed covariates. $\Gamma$ can be thought of as a measure of ``worst case'' bias in the sense that treatment assignment probabilities in matched pairs are allowed to vary arbitrarily as long as the odds ratio of treatment assignment for a pair of subjects is no greater than $\Gamma$. The largest $\Gamma$ for which the p-value is less than 0.05 is denoted by $\Gamma_{sens}$. We will use $\Gamma_{truth}$ to distinguish the true unknown worst case bias. $\Gamma_{sens}$ is interpreted in Rosenbaum's sensitivity analysis as the largest value of the worst case bias across matched pairs that does not invalidate the finding of evidence for a treatment effect. We refer to this as a \textit{worst case calibrated} sensitivity analysis. 
A classic example of this type of analysis is given in Chapter 4 of \cite{rosenbaum2002}. Applying the worst case sensitivity analysis to a study of the effects of heavy smoking on lung cancer mortality (\cite{hammond1964}), Rosenbaum finds that $\Gamma_{sens} \approx 6$ and interprets this result cogently:

\begin{quote}
  To attribute the higher rate of death from lung cancer to an unobserved covariate rather than to an effect of smoking, that unobserved covariate would need to produce a sixfold increase in the odds of smoking, and it would need to be a near perfect predictor of lung cancer. 
\end{quote}
A brief, more formal review of Rosenbaum's sensitivity analysis framework is in Section \ref{ss-review}. 

The worst case calibrated sensitivity analysis raises several potential questions. If we are convinced that there is no pair in Hammond's smoking study such that one unit is more than six times as likely to smoke as the other (i.e., $\Gamma_{truth}\le \Gamma_{sens}$), then we would conclude that our study provides convincing evidence that heavy smoking increases the rate of lung cancer mortality. However, what if, on average, unmeasured confounders do not alter the odds of smoking greatly but there are some subjects for whom the unmeasured confounders make them almost certain to smoke, e.g., a subject who experiences huge peer pressure to smoke.
If such a subject ends up in our sample of matched pairs, and we condition on matched pairs in which only one unit receives treatment, a standard practice when conducting matched pair randomization tests, then the odds ratio of treatment assignment in the matched pair containing that subject, and consequently $\Gamma_{truth}$, will be infinite. In such a case, since $\Gamma_{sens}$ is generally finite, we'd expect it to be smaller than $\Gamma_{truth}$. Now, suppose that there are such pairs in the Hammond study but that for most pairs the odds ratio of smoking between the units is much smaller than six. Using the worst case calibrated sensitivity analysis, we would conclude that the study is sensitive to bias. Is there potentially some natural quantification of average bias over the sample of matched pairs, say, $\Gamma'_{truth}$, that isn't infinite and perhaps is smaller than six? And if we calibrate our sensitivity analysis to this measure of bias rather than the worst case measure, will the sensitivity analysis be valid in the sense that the inference is conservative at level $\alpha$ for any $\Gamma \ge \Gamma'_{truth}$? If it is valid, are there other advantages to using the \textit{average case calibrated} sensitivity analysis over the worst case calibrated sensitivity analysis? In what follows, we attempt to answer these motivating questions in the context of a matched pair analysis of the association between cellphone use and car accidents.

\subsection{Outline}

In this paper we demonstrate that interpreting sensitivity analysis results in terms of average case rather than worst case hidden bias is both valid and conceptually more natural in many common scenarios. To illustrate our claim that the average case analysis is more natural we will perform a causal analysis of a study by \cite{tibshirani1997} that asks if there is an association between cellphone use and motor-vehicle collisions. The study is described in the following section. In section \ref{review} we review the model for sensitivity analysis of tests of no treatment effect and sensitivity intervals for attributable effects for binary data. In section \ref{worst-avg} we discuss the theory behind the validity of average case sensitivity analysis. Finally, the \cite{tibshirani1997} study is examined in this new light in section \ref{examples}. In particular, we see how the average case sensitivity analysis makes it possible to use additional information from the problem to empirically calibrate our sensitivity analysis in Section \ref{subsec:inter} and we extend the average case sensitivity analysis to the study of sensitivity intervals for attributable effects in Section \ref{attr-sec}.

\subsection{Motivating Example: Effects of cellphone use on the incidence of motor-vehicle collisions}

\cite{tibshirani1997} conducted a case-crossover study of the effects of cellphone use on the incidence of car collisions. In a case-crossover study each subject acts as her own control which has the benefit of controlling for potential confounders that are time-invariant, even if they are unobserved. Data collection took place at a collision reporting center in Toronto between July 1, 1994 and August 31, 1995 during weekday peak hours (10 AM to 6 PM). Consenting drivers who reported having been in a collision with substantial property damage and who owned a cellphone were included in the study. Drivers involved in collisions that involved injury, criminal activity, or transport of dangerous goods were excluded. The resulting study population included 699 individuals who gave permission to review their cellphone records and filled out a brief questionnaire about their personal characteristics and the features of the collision. The matched pair analysis compared cellphone usage in the 10-minute hazard window prior to the crash with a 10-minute control window on a chosen day prior to the crash. We will denote the time of the crash as $t$ and the hazard window as $t-10$ to $t-1$ minutes. The authors examined several different control windows:
\begin{enumerate}[leftmargin=20pt]
  \item{} \textit{Previous day}: time $t-10$ to $t-1$ minutes on the previous day.
  \item{} \textit{Previous weekday/weekend}: time $t-10$ to $t-1$ minutes on the previous weekday if the crash took place on a weekday and similarly if the crash took place on a weekend.
  \item{} \textit{One week prior}: time $t-10$ to $t-1$ minutes one week prior to the collision.
  \item{} \textit{Busiest cellphone day of previous three days}: time $t-10$ to $t-1$ minutes on the one day among the prior three to the collision with the most cellphone calls.
\end{enumerate} 

For each choice of control window, \cite{tibshirani1997} found that there was a significant positive association between cellphone usage and traffic collision incidence. The 2 x 2 contingency tables shown in Table \ref{tbl-cw} summarize the data using the four different control windows. 

\begin{table}[h!]
\centering
\begin{tabular}{lrrr}

&&\multicolumn{2}{c}{Control} \\
&& On phone & Not on phone \\ 
  \hline
  &&\multicolumn{2}{c}{\textit{Previous Weekday/end}} \\
  \hline
\multirow{2}{*}{Hazard} & On phone & 12 & 158 \\ 
& Not on phone & 23 & 506 \\ 
   \hline
  &&\multicolumn{2}{c}{\textit{One Week Prior}} \\
  \hline
\multirow{2}{*}{Hazard} & On phone & 6 & 164 \\ 
& Not on phone & 21 & 508 \\ 
   \hline
&&\multicolumn{2}{c}{\textit{Previous Driving Day}} \\
  \hline
\multirow{2}{*}{Hazard} & On phone & 18 & 119 \\ 
& Not on phone & 20 & 171 \\ 
   \hline
&&\multicolumn{2}{c}{\textit{Most Active Cellphone Day}} \\
  \hline
\multirow{2}{*}{Hazard} & On phone & 17 & 135 \\ 
& Not on phone & 43 & 504 
\end{tabular}
\vspace{20pt}
\caption{\textbf{One Week Prior}: results for one week prior control window versus hazard window;\textbf{Previous Weekday/end}: results for previous weekday/weekend control window versus hazard window; \textbf{Previous Driving Day}: results for previous driving day control window versus hazard window;\textbf{Most Active Cellphone Day}: results for most active cellphone day in previous 3 days control window versus hazard window.}\label{tbl-cw}
\end{table}

\subsection{Sensitivity of results to hidden bias}\label{sens-examp}
As this was an observational study, the associations cannot be assumed to be causal. We would like to quantify how large a hidden bias would have to be to explain the observed association between cellphone use and car accidents without it being causal. A sensitivity analysis seems appropriate and is a straightforward exercise (see Chapter 4, \cite{rosenbaum2002} for example). Table \ref{sens-table} shows the results of a standard worst case sensitivity analysis for each control window. Here, $\Gamma_{sens}$ is the largest value of $\Gamma$ such that the result are still significant at the $\alpha=0.05$ level. In our analysis of the case-crossover study from \cite{tibshirani1997} we condition on subjects who were on a cellphone in exactly one of the control and hazard windows (i.e., discordant case-crossover pairs). Thus, the odds ratio of treatment assignment for the two windows observed for any case-crossover subject can be viewed as the conditional odds that treatment occurs in a particular window. Hence, we can interpret $\Gamma$ as the maximum (and $1/\Gamma$ as the minimum) over all study subjects of the odds that a driver is using a cellphone during the hazard window and not during the control window.

\begin{table}[ht]
\centering
\begin{tabular}{lr}
 Control Window & $\Gamma_{sens}$ \\ 
  \hline
previous weekday/weekend & 4.92 \\ 
one week prior & 5.53 \\
  previous driving day & 4.15 \\  
  most active cellphone day & 2.40
\end{tabular}
\vspace{20pt}
\caption{Sensitivity analysis for (marginal) $\alpha=0.05$.}\label{sens-table}
\end{table}

The sensitivity analysis suggests that the most active cellphone day control window was the most conservative analysis. This is unsurprising since we would expect that the treatment assignment (cellphone use) would be biased toward the control window on a day when you used a cellphone relatively often. We can interpret these results as follows: \textit{the observed ostensible effect is insensitive to hidden bias that increases the odds that a driver was on a cellphone in the hazard window and not the control window on the most active cellphone day by at most a factor of 2.4.} In many observational studies this type of statement is very useful. However, it may be plausible that some study participants are exposed to infinite (or at least very large) hidden bias. For example, this happens if a subject was not driving during the control window and (almost) always uses her landline rather than her cellphone when she is not driving. When we condition on case-crossover pairs where the treatment is received in exactly one of the windows -- a standard practice when conducting a matched pair randomization test -- such a driver is always on a cellphone during the hazard window. When this happens, the observed ostensible effect is (almost) always sensitive to hidden bias, no matter how strong the observed association. Implicitly, in the worst case sensitivity analysis, the investigator is supremely skeptical; she assumes that it could be that all study participants suffer from the worst case hidden bias which, when it is possible that some study participant suffers from unbounded hidden bias, renders sensitivity analysis under the standard worst case interpretation uninformative. Yet in many studies where unbounded hidden bias in some matched pairs is plausible, as in our motivating example, we still want to examine the sensitivity of our results to potential hidden bias. If we could perform a valid, average case calibrated sensitivity analysis then we could (1) make sensitivity analysis informative even in the presence of pairs subject to unbounded hidden bias and (2) make the interpretation of sensitivity analysis results far less conservative. It turns out that there is a measure of the sample average bias that is generally finite in the presence of pairs subject to unbounded bias for data with binary treatment and outcome. Moreover, the sensitivity analysis calibrated to this measure of average bias is valid when using McNemar's statistic to test the null hypothesis of no treatment effect against the alternative of a positive treatment effect (i.e., that talking on a cellphone while driving increases the rate of automobile accidents).
 
\section{NOTATION AND REVIEW}\label{review}
\subsection{Notation} 

Our study sample consists of $S$ matched pairs where each pair $s=1,2,\dots,S$ is matched on a set of observed relevant covariates $\x_{s1}=\x_{s2}=\x_s$. Units in each pair are indexed by $i=1,2$. We let $Z_{si}$ and $R_{si}$ denote the treatment assignment and outcome, respectively, of the $i$-th unit of the $s$-th pair. The potential outcomes under treatment and control are denoted as $r_{Tsi}$ and $r_{Csi}$, respectively. Hence, we can write $R_{si}=Z_{si}r_{Tsi} + (1-Z_{si})r_{Csi}$. Under Fisher's sharp null hypothesis of no treatment effect, i.e., $r_{Tsi}=r_{Csi}$ for all $i$, we have that $R_{si}=r_{Csi}$. Hereafter, we will work under the null hypothesis and under the assumption that each pair was matched on some set of observed covariates $\x_s$. Additionally, we assume that there is some unobserved covariate $U_{si}$ that is associated with both treatment assignment and outcome and let $u_{si}$ be the realization of $U_{si}$ for the $i$-th unit of the $s$-th pair. Within pair differences in treatment and outcome will be denoted as $V_s = Z_{s1}-Z_{s2}$ and $y_s = r_{Cs1}-r_{Cs2}$. It will be convenient to define the following vector quantities: $\Z = (Z_{11},Z_{12},\dots,Z_{S2})^T$, $\mathbf{r} = (r_{C11},r_{C12},\dots,r_{CS2})^T$,$\U = (U_{11},U_{12},\dots,U_{S2})^T$, and $\A = (|y_1|,|y_2|,\dots,|y_S|)^T$.

To be very clear about the information on which we are conditioning we will define some important information sets. Let $\mathcal{F} = \{(\x_s,u_{si},r_{Csi},r_{Tsi}):\;s=1,2\dots,S,\, i=1,2\}$ be the set of \textit{fixed} observed and unobserved covariates for all units. Let $\mathcal{Z} = \{\Z:\; |V_s|=1,\, s=1,\dots,S\}$ be the set of matched pairs such that only one unit receives treatment. We assume that $\R$ is binary and we define $\mathcal{A}_{\1}=\{\A:\; |y_s|=1,\, s=1,\dots,S\}$. So $\mathcal{Z}\cap\mathcal{A}_{\1}$ is the set of discordant matched pairs. In the analysis that follows, we will condition on $\mathcal{F},\,\mathcal{Z}\cap\mathcal{A}_{\1}$. 
\subsection{Review: sensitivity analysis for binary data}\label{ss-review}

Under the assumption that all variables that confound treatment assignment are observed,
\[ Z_{si} {\perp\!\!\!\perp} (r_{Csi},r_{Tsi})\,|\,\X_s \quad \text{(Ignorability)}\]
our matched observational study should closely resemble a randomized study and thus  $\Pr{\Z=\mathbf{z}|\,\mathcal{F},\,\mathcal{Z}\cap\mathcal{A}_{\1}} = 1/2^S$ for $\mathbf{z}\in\mathcal{Z}$. In practice, this assumption is rarely valid and the probability of treatment assignment depends materially on the unobserved covariates $\U$. A second assumption made in the causal framework introduced in \cite{rosenbaum1983} is the \textit{Positivity} assumption -- $0 < \Pr{Z_{si}=1|\,\X_s} <1$ for all $s=1,2,\dots,S$ and $i=1,2$ -- which says that all units have a chance of receiving treatment. In our case-crossover study, however, this may not be an appropriate assumption. We introduce an example of how our case-crossover study might violate the positivity assumption in Section \ref{subsec:inter} and how our average case sensitivity analysis framework is able to handle violations of positivity. 

When both $Z$ and $r$ are binary it is common to use McNemar's statistic to test for treatment effect:
\begin{definition}
  \label{defn:mcnemars}
  For a matched pair study with binary treatment and outcome we define \textbf{McNemar's statistic} to be
  \begin{equation} \label{eqn:mcnemars} T(\Z,\br) = \sum_{s=1}^S \mathbbm{1}\{V_sY_s=1\}\,.\end{equation}
\end{definition}
Under the null distribution of no treatment effect $T(\Z,\br)$ follows a Poisson-Binomial distribution with probabilities $\{p_1,p_2,\dots,p_S\}$ where $p_s = \Pr{(Z_{s1}-Z_{s2})(r_{s1}-r_{s2})=1}$ is the probability that the unit with positive outcome, i.e., $r=1$, receives treatment in pair $s$. If we consider only discordant pairs and we assume, without loss of generality, that the first unit in each pair is the unit with positive outcome we may write
\begin{equation}
  \label{eq:ps1}
  p_s = \Pr{Z_{s1}=1|\mathcal{F},\,\mathcal{Z}\cap\mathcal{A}_{\mathbf{1}}}\,.
\end{equation}
Recall that the Poisson-Binomial distribution is the sum of independent, not necessarily identical Bernoulli trials. If $\X_s$ contains the complete set of relevant covariates then $p_s$ equals $1/2$ for all pairs and we can conduct inference using $\operatorname{B}(1/2,S)$ as our null distribution, effectively treating our data as being the outcome of a randomized study. As we mentioned earlier in this section, if there is some unobserved characteristic $U$ that is relevant to treatment assignment and outcome then $\{p_1,\dots,p_S\}$ are unknown and consequently the exact null distribution is no longer available to the investigator. When this is the case, a sensitivity analysis like the one conducted informally in Section \ref{sens-examp} can be used to determine how sensitive the investigator's conclusions are to departures from the ideal randomized design. Following Chapter 4 of \cite{rosenbaum2002} we can formalize the notion of a sensitivity analysis introduced in Sections \ref{subsec:evidence} and \ref{sens-examp} with a simple sensitivity model where
\begin{equation}
  \label{eq:sens-gamma}
  \frac{1}{1+\Gamma} \le \Pr{Z_{s1}=1|\mathcal{F},\,\mathcal{Z}\cap\mathcal{A}_{\mathbf{1}}} \le \frac{\Gamma}{1+\Gamma}
\end{equation}
for all $s=1,\dots,S$ and where $\Gamma\ge 1$ is the sensitivity parameter that bounds the extent of departure from a randomized study. Proposition 12 in Chapter 4 of \cite{rosenbaum2002} states that \eqref{eq:sens-gamma} is equivalent to the existence of the following model
\begin{equation}
  \label{eq:sens-equiv}
  \log\left(\frac{p_s}{1-p_s}\right) = \gamma\left(u_{s1}-u_{s2}\right)\,,\;s=1,\dots,S
\end{equation}
where $\exp(\gamma) = \Gamma$, $\gamma \ge 0$, and $u_{si} \in [0,1]$ for $s=1,\dots,S$ and $i=1,2$. The restriction of the unobserved confounder to the unit interval in this equivalent representation preserves the non-technical interpretation of $\Gamma$ used in section \ref{sens-examp} as a bound on the odds that the driver was talking on a cellphone in the hazard window. Henceforth, we assume that $U_{si}$ and its realization $u_{si}$ belongs to the unit interval for $s=1,\dots,S$ and $i=1,2$. However, the distribution of $U_{si}$ on the unit interval may be arbitrary.

Under this sensitivity model, if we let $T^+$ be binomial with success probability $\Gamma/(1+\Gamma)$ and $T^-$ be binomial with success probability $1/(1+\Gamma)$ it follows from Theorem 2 of \cite{rosenbaum1987} that
\begin{equation}
  \label{eq:stoch-order}
  \Pr{T^- \ge k} \le \Pr{T \ge k|\mathcal{F},\,\mathcal{Z}\cap\mathcal{A}_{\mathbf{1}}} \le \Pr{T^+ \ge k}
\end{equation}
for all $k = 1,\dots,S$. This inequality is tight in the sense that it holds for any realization $\mathbf{u}$ of $\mathbf{U}$. For conducting a hypothesis test, the stochastic ordering in \eqref{eq:stoch-order} gives us bounds on the p-value of our test for a given magnitude of bias $\Gamma$. If $\Gamma\ge \Gamma_{truth}$, then $T^+$ yields a valid, albeit conservative,  reference distribution for testing the null hypothesis of no treatment effect against the alternative of a positive treatment effect.

\subsection{Attributable effects for binary outcomes: hypothesis tests and confidence intervals}
\label{subsec:attr-binary}
Attributable effects are a way to measure the magnitude of a treatment effect on a binary outcome.  The number of attributable effects is the number of positive outcomes among treated subjects that would not have occurred if the subject was not exposed to treatment.  In this section, we review \cite{rosenbaum2002a}'s procedure to construct one-sided confidence statements about attributable effects in the context of the cellphone case-crossover study.

Let $\widetilde{S}$ be the number of \textit{all} pairs in the study, discordant or not, and let the first $S$ be the discordant pairs. If we assume that $r_{Tsi} \ge r_{Csi}$, that talking on a cellphone cannot prevent an accident, then we can write the attributed effect as
\begin{equation}
  A = \sum_{s=1}^{\widetilde{S}}\sum_{i=1}^2 Z_{si}(r_{Tsi}-r_{Csi}) = \sum_{s=1}^{\widetilde S} Z_{s1}(r_{Ts1}-r_{Cs1})
\end{equation}
where the first unit of $s$-th pair is the observation from the hazard window. Why does the second equality hold? If the subject was talking on a cellphone in the control window, that is $Z_{s2}=1$, then we observe $r_{Ts2}=0$ which by our assumption that talking on a cellphone cannot prevent an accident implies that $r_{Cs2}=0$. So attributable effects can only occur among discordant pairs where the subject was talking on a cellphone in the hazard window or concordant pairs where the subject was talking on a cellphone in both windows. The following table characterizes the four types of possible pairs in our case-crossover study,

\begin{table}[h!]
  \centering
  \begin{tabular}{lrrrrrr}
    & $Z_{s1}$ & $Z_{s2}$ & $R_{s1}$ & $R_{s2}$ & $r_{Ts1}$ & $r_{Cs1}$ \\
    \hline
    $D(+,-)$ & 1 & 0 & 1 & 0 & 1 & - \\
    $D(-,+)$ & 0 & 1 & 1 & 0 & 1 & 1 \\
    $C(-,-)$ & 0 & 0 & 1 & 0 & 1 & 1 \\
    $C(+,+)$ & 1 & 1 & 1 & 0 & 1 & - \\
  \end{tabular}
  \vspace{20pt}
  \caption{The four possible types of pairs in our case-crossover study. $D$ and $C$ indicate discordant and concordant pairs, respectively, and the $+$ and $-$ indicate if a unit in the pair was treated or not, respectively.}\label{tbl:attr}
\end{table}
$D$ and $C$ indicate discordant and concordant pairs, respectively. $D(+,-)$ is the set of discordant pairs where the subject was on a cellphone in the hazard window, $D(-,+)$ is the set of discordant pairs where the subject was on a cellphone in the control window, $C(+,+)$ is the set of concordant pairs where the subject was on a cellphone in both hazard and control windows, and $C(-,-)$ is the set of concordant pairs where the subject was not on a cellphone in either window. If there are no attributable effects then we know that $r_{Cs1}=1$ in $D(+,-)$ and $C(+,+)$ and we have that $R_{s1}=r_{Cs1}$ for all pairs $s$, concordant or discordant. We can write the probability that the subject was talking on a cellphone at the time of accident for each type of pair as (1) $\Pr{Z_{s1}R_{s1}=1 | D(+,-)\cup D(-,+)} = p_s$, where $p_s$ here is equivalent to the $p_s$ defined in Section \ref{ss-review} when there are no attributable effects; (2) $\Pr{Z_{s1}R_{s1}=1 | C(-,-)} = 0$; and (3) $\Pr{Z_{s1}R_{s1}=1 | C(+,+)} = 1$. Now let $c^+=|C(+,+)|$ denote the cardinality of the set of concordant pairs where the subject was on a cellphone in both windows and let $s=S+1,\dots,S+c^+$ be the pairs belonging to $C(+,+)$. Then if $A=0$ we can define the standardized deviate for McNemar's statistic $T$ as 
\begin{align}\label{eq:norm-dev}
  \widetilde{T} &=  \frac{\sum_{s=1}^{S} Z_{s1}r_{Cs1} - \sum_{s=1}^S p_s}{\left\{\sum_{s=1}^Sp_s(1-p_s)\right\}^{1/2}} \notag  \\
& =  \frac{\sum_{s=1}^{S+c^+} Z_{s1}R_{s1} - \left(\sum_{s=1}^S p_s + c^+\right)}{\left\{\sum_{s=1}^Sp_s(1-p_s)\right\}^{1/2}}\,. 
\end{align}
$\widetilde{T}$ defines a normal reference distribution for $\sum_{s=1}^{S}Z_{s1}r_{Cs1}$ that we can use to conduct approximate inference.
If $A=a>0$, then $Z_{\widetilde{s}1}R_{\widetilde{s}1} = Z_{\widetilde{s}1}r_{T\widetilde{s}1} = Z_{\widetilde{s}1}(r_{C\widetilde{s}1}+1)$ for pair $\widetilde{s}$ belonging to the set of $a$ pairs with attributable accidents and the second equality above does not hold. When this equality fails to hold, the standard normal deviate $\widetilde T$ cannot be computed from the observed data conditional on $\mathcal F$.
How then can we adjust $\widetilde{T}$ for attributable accidents so that it can be computed from the observed data? Because we've assumed talking on a cellphone cannot prevent an accident, we only need to consider two cases. If pair $\widetilde{s}$ belongs to $D(+,-)$ then we subtract $Z_{\widetilde{s}1}(r_{T\widetilde{s}1}-r_{C\widetilde{s}1})=1$ from $\sum_{s=1}^{\widetilde{S}} Z_{s1}R_{s1}$, $p_{\widetilde{s}}$ from the expectation, and $p_{\widetilde{s}}(1-p_{\widetilde{s}})$ from the variance term. If $\widetilde{s}$ belongs to $C(+,+)$ we again subtract $1$ from  $\sum_{s=1}^{\widetilde{S}} Z_{s1}R_{s1}$ and subtract $1$ from the $|C(+,+)|$ in the expectation while leaving the variance term unchanged.

Let $\boldsymbol{\delta} = (\delta_{11},\delta_{12},\dots,\delta_{\widetilde{S}1},\delta_{\widetilde{S}2})^T$ be defined as $\delta_{sj} = r_{Tsj}-r_{Csj}$. We say that $\boldsymbol{\delta}$ is \textit{compatible} if $\delta_{sj}=0$ whenever $Z_{sj}=1$ and $R_{sj}=0$ or $Z_{sj}=0$ and $R_{sj}=1$. Under this definition, we can express the number of attributable effects as $A=\Z^T\boldsymbol{\delta}$. 
For a compatible $\boldsymbol{\delta}$ such that $\Z^T\boldsymbol{\delta}= a$ we denote $\widetilde{T}_{-\boldsymbol{\delta}}$ to be $\widetilde{T}$ adjusted for the $a$ attributable effects. $\widetilde{T}_{-\boldsymbol{\delta}}$ defines a new reference distribution for $\sum_{s=1}^{S}Z_{s1}r_{Cs1}$ under the null hypothesis that potential accidents indicated by $\boldsymbol{\delta}$ are attributable to talking on a cellphone while driving.  We can write $\widetilde{T}_{-\boldsymbol{\delta}}$ as
\begin{equation}\label{eq:T-delt}
    \widetilde{T}_{-\boldsymbol{\delta}} = \frac{\sum_{s=1}^{S+c^+}Z_{s1}R_{s1}(1-\delta_{s1}) - \left(\sum_{s=1}^S(1-\delta_{s1})p_s+ \sum_{s=S+1}^{S+c^+}(1-\delta_{s1})\right)}{\left\{\sum_{s=1}^S(1-\delta_{s1})p_s(1-p_s)\right\}^{1/2}}\,.
\end{equation}

Using the notion of asymptotic separability (\cite{gastwirth2000}),  \cite{rosenbaum2002a} show that choosing a compatible $\boldsymbol{\delta}^*\equiv\boldsymbol{\delta}^*(a)$ with $\Z^T\boldsymbol{\delta}^*(a)=a$ that maximizes the expectation, and when there are ties to maximize the variance term, yields a reference distribution that, asymptotically, has the largest upper tail area among compatible $\boldsymbol{\delta}(a)$. Thus, we can use $T_{-\boldsymbol{\delta}^*}$ to test the plausibility that there are at most $a$ attributable effects. Since $A$ is a random variable we refrain from calling this a hypothesis test, a term usually reserved for unknown parameters. From equation \eqref{eq:T-delt} we see that $\boldsymbol{\delta}^*(a)$ includes the $a$ pairs in $D(+,-)$ with the smallest values of $p_s$.

It is possible to invert the one-sided ``plausibility tests" introduced above using $T_{-\boldsymbol{\delta}^*}$ that we just introduced in order to construct a confidence interval for attributable effects of the form $\{A:\,A> a\}$. It turns out that if it is plausible that there are $a$ attributable effects then it is also plausible that there are $a+1$ attributable effects (\cite{rosenbaum2002}).
This monotonicity property leads to a very simple procedure to construct a one-sided confidence interval in the absence of hidden bias. First, if $p_s=1/2$ for all $s=1,2,\dots,\widetilde{S}$ then for any $a\ge 0$ we can compute $\widetilde{T}_{-\boldsymbol{\delta}^*}=\{T - a - (S-a)/2\}/\{(S-a)^{1/2}/2\}$.

Next, starting with $a=0$ we check if $\widetilde{T}_{-\boldsymbol{\delta}^*} < \Phi^{-1}(1-\alpha)$, incrementing $a$ by one if it isn't and stopping if it is. Finally, let $a^*$ be equal to one less the value of $a$ at which we terminate the procedure. Using the monotonicity result above we have that $\{A:\,A > a^*\}$ is a one-sided $100\times(1-\alpha)\%$ confidence interval.

If we bound the worst case calibrated bias above by $\Gamma$ then we can construct a one-sided $100\times(1-\alpha)\%$ confidence interval following the same procedure but instead using $\widetilde{T}_{-\boldsymbol{\delta}^*,\Gamma} = \{T - a - (S-a)p_\gamma\}/\{(S-a)p_\gamma(1-p_\gamma)\}^{1/2}$
as our standard deviate where $p_\gamma = \Gamma/(1+\Gamma)$. The resulting one-sided $100\times(1-\alpha)\%$ confidence interval is referred to as a \textit{sensitivity interval} (See Chapter 4, \cite{rosenbaum2002}). For a detailed illustration of these procedures we refer the reader to Sections 3-6 of \cite{rosenbaum2002a}.

\section{FROM WORST CASE TO AVERAGE CASE SENSITIVITY ANALYSIS}\label{worst-avg}
 
\subsection{Valid average case analysis: binary outcome}\label{avg-bin}
An investigator conducting a sensitivity analysis tries to determine a test statistic whose null distribution is known conditional on the presence of hypothetical bias $\Gamma$. Since the distribution of $U_{si}$ is unknown, traditionally, the investigator assumes the worst. That is, the null distribution is constructed assuming that in each pair $u_{s1} = 1$ and $u_{s2}=0$.  As noted in Section \ref{ss-review},  $T^+$ yields a valid reference distribution for testing the null of no-treatment effect when $\Gamma \ge \Gamma_{truth}$. However, such a test is inherently conservative because it is designed to be valid for any realization of $\U$ since $\U$ and thus since $\p = (p_1,\dots,p_S)^T$ are generally unknown. This is why we resort to a sensitivity analysis where we allow $p_s$ to vary arbitrarily as long as $p_s/(1-p_s) \le \Gamma$. 
In Section~\ref{subsec:evidence} we asked whether there was some natural quantification of average bias to which we could calibrate our sensitivity analysis which would lead to a less conservative analysis than the worst case calibration. One such quantification is $\Gamma'_{truth} = \overline{\p}/(1-\overline{\p})$  where $\overline{\p}$ is the sample average of $p_s$. In what follows, we show that if we calibrate our sensitivity analysis to $\Gamma'_{truth}$ it will be valid and less conservative than the worst case calibration.
To prove this, we show that $T' \sim \operatorname{B}(\Gamma'_{truth}/(1+\Gamma'_{truth}),S)$ yields a valid reference distribution for testing the null of no treatment effect against the alternative of a positive treatment effect.
In Theorem \eqref{thm:avgGam} below, we prove that the upper tail probability for McNemar's statistic $T$ is bounded above by the upper tail probability for $T'$.
 \begin{theorem}
   \label{thm:avgGam}
   Set $\overline{\p}= \left(\sum_{s=1}^S p_s\right)/S$ and $\Gamma'_{truth} = \overline{\p}/(1-\overline{\p})$ and let $V_s \stackrel{iid}{\sim} \operatorname{Bern}(\Gamma'_{truth}/(1+\Gamma'_{truth}))$ for all $s=1,2,\dots,S$. Define $T' = V_1 + \dots + V_S$. Then
   \[  \Prob{T \ge a |\mathcal{F},\,\mathcal{Z}\cap \mathcal{A}_{\mathbf{1}}}\le \Prob{T' \ge a}\;\text{for all}\;\; a \ge S\overline{\p}\,.\]
 \end{theorem}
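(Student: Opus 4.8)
The plan is to read the claim as a majorization statement and to exploit Schur-concavity of the upper-tail probability in the trial probabilities. Since the constraint $\sum_{s=1}^S p_s = S\overline{\p}$ is fixed, the vector $\p=(p_1,\dots,p_S)$ majorizes the constant vector $\overline{\p}\,\1=(\overline{\p},\dots,\overline{\p})$, and $T'$ is precisely the Poisson--Binomial (here Binomial) variable attached to $\overline{\p}\,\1$. It therefore suffices to show that $\p\mapsto\Pr{T\ge a}$ is Schur-concave on the simplex $\{\p:\sum_s p_s=S\overline{\p}\}$ for every $a\ge S\overline{\p}$: a Schur-concave function is maximized over this simplex at the most balanced point $\overline{\p}\,\1$ (the minimal element in the majorization order), which gives $\Pr{T\ge a}\le\Pr{T'\ge a}$ at once. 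Conceptually this is the statement that, among all Poisson--Binomials with a common mean, the Binomial is the most spread out and hence carries the heaviest upper tail above the mean.

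First I would reduce the Schur-concavity to a two-coordinate computation. Because the Poisson--Binomial law is symmetric in $(p_1,\dots,p_S)$, the Schur--Ostrowski criterion says it is enough to check, for each pair $(i,j)$, that $(p_i-p_j)\big(\partial_{p_i}-\partial_{p_j}\big)\Pr{T\ge a}\le 0$. Writing $T=V_i+V_j+W_{ij}$ with $W_{ij}=\sum_{s\neq i,j}V_s$ independent of $(V_i,V_j)$, I would condition on $(V_i,V_j)$, expand $\Pr{T\ge a}$ into the three terms indexed by $V_i+V_j\in\{0,1,2\}$, and differentiate. A short calculation collapses the derivative difference to
\[
\big(\partial_{p_i}-\partial_{p_j}\big)\Pr{T\ge a}=(p_i-p_j)\Big(\Pr{W_{ij}=a-1}-\Pr{W_{ij}=a-2}\Big),
\]
so that the Schur--Ostrowski quantity equals $(p_i-p_j)^2\big(\Pr{W_{ij}=a-1}-\Pr{W_{ij}=a-2}\big)$. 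Hence Schur-concavity at threshold $a$ is equivalent to the pointwise monotonicity $\Pr{W_{ij}=a-2}\ge\Pr{W_{ij}=a-1}$ of the sub-sum's mass function, after which I would carry $\p$ to $\overline{\p}\,\1$ by a finite sequence of Robin Hood transfers and apply this pairwise inequality along each leg of the path.

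The remaining step, and the main obstacle, is certifying this mass-function monotonicity in the correct range. The law of $W_{ij}$ is a convolution of Bernoulli laws and is therefore log-concave, hence unimodal, so $\Pr{W_{ij}=a-2}\ge\Pr{W_{ij}=a-1}$ holds once $a-2$ lies at or beyond the mode of $W_{ij}$. The delicate point is that the mode of $W_{ij}$ sits near its mean $S\overline{\p}-p_i-p_j$, so bounding it below $a-1$ is exactly where the tail hypothesis $a\ge S\overline{\p}$ must be brought to bear, and this is where genuine care is required: the requirement must be met simultaneously for every pair $(i,j)$ and at every intermediate configuration along the transfer path, and the gap between $S\overline{\p}$ and the sub-sum's mode is what the argument must control. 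Once this monotonicity is secured throughout the relevant tail, summing the pairwise Schur--Ostrowski inequality along the transfer sequence yields $\Pr{T\ge a}\le\Pr{T'\ge a}$ and completes the proof.
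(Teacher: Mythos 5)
Your overall route is the paper's route: observe that $\p$ majorizes $\overline{\p}\,\1$ and reduce the theorem to a Schur property of the Poisson--Binomial tail as a function of $\p$. The paper does not prove that property --- it cites Gleser (1975) and Hoeffding (1956, Theorem 4) and stops there --- whereas you attempt to prove it. Your Schur--Ostrowski reduction is correct: the identity $(\partial_{p_i}-\partial_{p_j})\Prob{T\ge a}=(p_i-p_j)\bigl(\Prob{W_{ij}=a-1}-\Prob{W_{ij}=a-2}\bigr)$ checks out, so everything hinges on the pmf monotonicity $\Prob{W_{ij}=a-2}\ge\Prob{W_{ij}=a-1}$ holding along the entire transfer path.

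That last step is the whole mathematical content of the theorem (it is essentially Hoeffding's Theorem 4 itself), and you leave it unproven, so the proposal is not a proof. Worse, the step as you envision it fails in part of the stated range. Take $S=2$, $\p=(1,0)$, $a=1=S\overline{\p}$: then $W_{12}$ is the empty sum, $\Prob{W_{12}=a-1}-\Prob{W_{12}=a-2}=1-0>0$, the Schur--Ostrowski quantity is positive, and indeed $\Prob{T\ge 1}=1>3/4=\Prob{T'\ge 1}$. (A non-integer version: $\p=(0.8,0)$, $a=0.9\ge S\overline{\p}=0.8$ gives $0.8>0.64$.) So the monotonicity you need --- and the theorem as literally stated --- only holds once $a-1\ge S\overline{\p}$, which is the range Hoeffding's Theorem 4 actually delivers (the binomial minimizes $\Prob{b\le T\le c}$ when $b\le S\overline{\p}\le c$; take $b=0$, $c=a-1$). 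Even after correcting the range, your proposed certificate (log-concavity plus ``the mode of $W_{ij}$ sits near its mean'') is too weak: the mode of $W_{ij}$ can be as large as $S\overline{\p}-p_i-p_j+1$, so unimodality alone yields the required inequality only for $a\ge S\overline{\p}-p_i-p_j+3$, which can be close to $S\overline{\p}+3$. Closing this genuinely requires the sharper extremal/induction argument of Hoeffding or Gleser --- exactly the part the paper outsources to its references.
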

 \begin{proof}
   Observe that $\p$ majorizes $\overline{\p}\cdot\1$ and note that if a function $f(\p)$ is Schur-convex in $\p$ then $f(\p) \ge f(\overline{\p}\1)$. What remains to be shown is that the distribution function for a Poisson-Binomial is Schur-convex in $\p$. See \cite{gleser1975} for this approach and \cite{hoeffding1956} for the original proof. The theorem as stated is an immediate corollary of Theorem 4 in \cite{hoeffding1956}.  \cite{gleser1975} presents a more general version of this result which holds when the success probabilities of $T$ majorize those of $T'$.
 \end{proof}
\begin{remark}\label{rmk:avgGam}
  Theorem \eqref{thm:avgGam} is a finite sample result whose proofs we refer to are both rather technical. An analogous asymptotic result follows from much simpler arguments. The variance of a Bernoulli random variable with success $p$ can be written as $f(p)=p(1-p)$. $f$ is clearly concave and thus by Jensen's Inequality, $\Var{T} \le \Var{T'}$. Since the expectation of $T$ and $T'$ are equal, using a normal approximation to the exact permutation test will asymptotically yield the same stochastic ordering as in Theorem \eqref{thm:avgGam}.
\end{remark}
\begin{remark}\label{rmk:T-plus-bnd}
It is important to note that $\Gamma'_{truth}\le\Gamma_{truth}$ since $p_s/(1-p_s) \le \Gamma_{truth}$ for $s=1,\dots,S$. Consequently, we have that $\Prob{T'\ge a} \le \Prob{T^+\ge a}$ which implies that sensitivity analysis with respect to $\Gamma'$, the average case calibrated sensitivity analysis, is less conservative than the worst case calibrated sensitivity analysis with respect to $\Gamma$.
\end{remark}

The implication of this theorem is that it is safe to interpret a sensitivity analysis in terms of $\Gamma'$, an upper bound on the sample average hidden bias ($\overline{\p}/(1-\overline{\p})$). For example, when using the \textit{most active cellphone day} control window we have $\Gamma_{sens}=2.4$. Previously, we would say that if no case-crossover pair was subject to hidden bias larger than 2.4, then the data would still provide evidence that talking on a cellphone increases the risk of getting in a car accident. Now, some case-crossover pairs may be subject to hidden bias (much) larger than 2.4, as long as the sample average hidden bias is no larger than 2.4. It is important to note that this interpretation is only valid for binary outcomes. The proof relies on Schur-convexity of the distribution function of our test statistic with respect to $\p$ which requires that it be symmetric in $\p$. For more general tests, such as the sign-rank test, this is not the case.

Some additional applications of Theorem~\ref{thm:avgGam} can be found in the Web Appendices. Web Appendix A considers the case when $U_{s1}$ and $U_{s1}$ measure some time-varying propensity of subject $s$ to use his cellphone. Using Theorem~\ref{thm:avgGam} we develop a little theory and a numerical example. Web Appendix B provides details on how Theorem \ref{thm:avgGam} can be applied when $U$ is not restricted to the unit interval.


\section{THE EFFECT OF CELLPHONE USE ON MOTOR-VEHICLE COLLISIONS}\label{examples}

In this section we return to our motivating example to see how our average case theory can provide interpretive assistance to our standard sensitivity analysis we carried out in Section \ref{sens-examp} and allow us to incorporate additional information to empirically calibrate our average case sensitivity analysis.

\subsection{Driving intermittency}\label{subsec:inter}

The study conducted in \cite{tibshirani1997} did not have access to direct information on whether an individual was driving during the control window. The authors examine the effect of driving intermittency during the control window on their relative-risk estimate by bootstrapping the estimate using an intermittency rate of $\widehat{\rho} = 0.65$. In other words, they correct for bias due to the possibility that a subject was not driving during the control window. The intermittency rate was estimated using a survey asking 100 people who reported car crashes whether they were driving at the same time the previous day. Alternatively, one may ask a related question in the context of a sensitivity analysis - does the bias due to driving intermittency explain the observed association between cellphone usage and traffic incidents? Given that the study took place in the early 1990s when, for some cellphones and carphones were synonymous, it would not be surprising if many study participants (almost) always used their landlines rather than their cellphones when not driving, violating the positivity assumption. Therefore, the only plausible $\Gamma_{truth}$ is infinite (or at least very large) when conditioning on case-crossover pairs where the subject is on her cellphone in only one of the two windows. This renders the worst case sensitivity analysis uninformative. No magnitude of association between cellphone use and car accidents would convince us that the relationship was causal if we stuck to the worst case calibration of the sensitivity analysis. The average case calibration, on the other hand, still has a chance. We can use our estimate $\widehat{\rho}$ to approximate a plausible value of $\overline{\textbf{p}}$,
$\overline{\textbf{p}} = (1-\widehat{\rho})\cdot 1 + \widehat{\rho}\cdot 0.5 = 0.675$, 
and a corresponding plausible value of $\Gamma'_{truth}$, 
$\Gamma'_{truth} = \overline{\textbf{p}}/(1-\overline{\textbf{p}}) = 2.1 \,.$
Theorem \eqref{thm:avgGam} circumvents the conceptual hurdle of unbounded $\Gamma_{truth}$ and allows us to confidently use a sensitivity analysis to quantitatively assess the causal evidence. Moreover, it allows us to incorporate information about $\rho$ into our analysis. If the association between cellphone use and motor vehicle collisions is causal in nature, our empirical calibration suggests that our test for treatment effect should be insensitive to unobserved biases with magnitude $\Gamma' \approx 2.1$.

\subsection{An alternative approach to handling pairs with unbounded bias}
There are other approaches to dealing with the example of infinite bias we just presented. For instance, the investigator may be more confident in specifying an upper bound on the worst case bias to be finite, $\Gamma<\infty$, for a proportion $1-\beta$ of the matched pair sample than he is in working in terms of the average case bias. If he has a good sense of what proportion $\beta$ of the pairs is exposed to unbounded bias he may drop $\beta\times S$ pairs where the treated unit had positive outcome and perform the standard worst case sensitivity analysis on the remaining $(1-\beta)\times S$ pairs. \cite{rosenbaum1987} proved that this method yields a valid sensitivity analysis. This strategy would be particularly suited for the example of driver intermittency discussed above. However, this approach assumes this particular pattern of unmeasured confounding is present and driver intermittency is just one of many sources of potential bias. On the other hand, the average case analysis accomodates arbitrary patterns of bias that may lead to large differences in average and worst case biases.

\subsection{Average case sensitivity analysis for attributable effects}\label{attr-sec}

How many of the recorded accidents in our study can be attributed to the driver talking on a cellphone? Recall from Section \ref{subsec:attr-binary} that the set indicated by $\boldsymbol{\delta}^*$ includes the $a$ pairs in $D(+,-)$ with the smallest values of $p_s$. Although we cannot compute $\widetilde{T}_{-\boldsymbol{\delta}^*}$ and thus cannot use it directly to conduct inference, we can compute a lower bound that we will show can be used to perform an average case sensitivity analysis:

\begin{align}\label{eq:norm-dev-star}
  \widetilde{T}_{-\boldsymbol{\delta}^*} & = \frac{\sum_{s=1}^{S}Z_{s1}r_{Cs1} - \sum_{s=1}^S(1-\delta^*_{s1})p_s}{\left\{\sum_{s=1}^S(1-\delta^*_{s1})p_s(1-p_s)\right\}^{1/2}} \notag \\
  & = \frac{\sum_{s=1}^S Z_{s1}R_{s1}(1-\delta^*_{s1})  - \sum_{s=1}^S(1-\delta_{s1}^*)p_s }{\left\{\sum_{s=1}^S(1-\delta^*_{s1})p_s(1-p_s)\right\}^{1/2}} \notag \\
  & = \frac{T-a - (S-a)\overline{\p}(a) }{\left\{\sum_{s=1}^S(1-\delta^*_{s1})p_s(1-p_s)\right\}^{1/2}} \notag\\
  & \ge \frac{T-a - (S-a)\overline{\p}(a) }{\left\{(S-a)\overline{\p}(a)(1-\overline{\p}(a))\right\}^{1/2}} = \widetilde{T}(\overline{\p}(a))
\end{align}
where $\overline{\p}(a) = \sum_{s=1}^S(1-\delta_{s1}^*)p_s/(S-a)$. The last inequality follows from Jensen's inequality applied to the variance term in the denominator.  Notice that instead of applying Theorem \eqref{thm:avgGam} in order to derive a sensitivity analysis in terms of the average bias we use the simpler argument in Remark \eqref{rmk:avgGam}. Now note that if $p_s \ge \underline{p}$ for all $s=1,\dots,S$ then we can relate the trimmed average probability, $\overline{\p}(a)$, to $\overline{\p}$ as follows
\begin{equation}\label{p-bar-lb} \overline{\p} \ge \frac{(S-a)\overline{\p}(a)+a\cdot\underline{p}}{S} = q(a)\,.\end{equation}

We can use this relationship to construct a simple procedure -- mirroring that of Section \ref{subsec:attr-binary} -- to perform an average case calibrated sensitivity analysis for one-sided confidence intervals of the form $\{A:\,A> a\}$ that yields average case calibrated sensitivity intervals. The procedure can be summarized as follows,

\begin{enumerate}[leftmargin=20pt]
  \item{} Choose a desired average calibrated sensitivity parameter $\Gamma'$.
  \item{} For $a=0$ solve $q(a) = \Gamma'/(1+\Gamma')$ for $\overline{\p}(a)$ and denote the solution $p(a,\gamma')$. Compute $\widetilde{T}(p(a,\gamma'))$.
  \item{} If $\widetilde{T}(p(a,\gamma')) < \Phi^{-1}(1-\alpha)$ then conclude it is plausible that none of the accidents can be attributed to talking on a cellphone.
  \item{} Else, repeat steps (2) and (3) for $a=1,\dots,S$ stopping when $\widetilde{T}(p(a,\gamma')) < \Phi^{-1}(1-\alpha)$. Let $a^* = a-1$.
  \item{} Return the $100\times(1-\alpha)\%$ sensitivity interval $\{A:\,A > a^*\}$ and conclude that it is plausible that more than $a^*$ of the accidents are attributable to talking on a cellphone when exposed to an average bias of at most $\Gamma'$.
\end{enumerate}

Just as in the simple test for no treatment effect, we see that we have a nearly identical procedure to the worst case sensitivity analysis with an average interpretation of the bias parameter. In fact, the procedure also yields a corresponding worst case calibration for the computed sensitivity interval. Under the worst case calibration, the sensitivity interval from step (5) would correspond to a worst case bias $\Gamma = p(a^*,\gamma')/(1-p(a^*,\gamma'))$.

 How might we apply this procedure to our example? For a given control window we would like to make confidence statements such as, \textit{at the 95\% level it is plausible that there are $a^*$ or more accidents attributable to talking on a cellphone}. Recall the empirically calibrated average case bias from Section \ref{subsec:inter}, $\Gamma' \approx 2.1$. We may also be interested making sensitivity statements such as, \textit{if the average probability of talking on a cellphone during the hazard window is at most 2.1 times that of talking on a cellphone in the control window for drivers in our study, $\Gamma'=2.1$, it is plausible at the 95\% level that there are $a^*$ or more accidents attributable to talking on a cellphone}. Table \ref{attr-table} summarizes the plausible range of attributable accidents for each of the four different control windows. For all four control windows we set $\Gamma'=2.1$. The first column is the number of discordant pairs in which the driver was on a cellphone during the control window. The second column reports the lower bound $a^*$ of the one-sided sensitivity intervals for $\alpha=0.05$ We also report the corresponding worst case calibrated bias in the last column of Table \ref{attr-table}. In the cellphone study we have no convincing reason to believe that $\underline{p} > 0$ but in other examples, it may make sense that $p_s$ is bounded from below, which has the effect of making the procedure less conservative.

\begin{table}[ht]
\begin{center}
\begin{tabular}{lcccr} 
  Control Window & $|D(+,-)|$ & $a^*$ & $\Gamma'$ & $\Gamma$\\ 
  \hline
  previous weekday/weekend &158&28&2.1&4.04\\ 
  one week prior &164&31& 2.1&4.37 \\
  previous driving day &119&18&2.1&3.51 \\  
  most active cellphone day &134&5&2.1&2.3
\end{tabular}
\vspace{20pt}
\caption{Sensitivity analysis for 95\% one-sided confidence intervals for attributable effects of the form $\{A:\,A> a^*\}$. $\Gamma'$ indicates the average calibration bias that we specify for the procedure and $\Gamma$ is the implied worst case calibration that corresponds to the computed interval. We assume that $\underline{p}=0$.}\label{attr-table}
\end{center}
\end{table}

We find that even if the average probability of talking on a cellphone during the hazard window was at most 2.1 times that of talking on a cellphone on the same day one week prior, it is plausible that there are 31 or more accidents attributable to talking on a cellphone. The implied worst case bias associated with this statement is $\Gamma=4.37$. What this means is that we would arrive at the same conclusion about the number of plausible attributable accidents if we put an upper bound on the worst case bias of $\Gamma=4.37$ and followed the standard confidence interval procedure for attributable effects outlined in Section \ref{subsec:attr-binary} and \cite{gastwirth2000}. Unlike the sensitivity analysis for the simple test for no treatment effect, the average case calibrated sensitivity analysis for attributable effects is not guaranteed to be less conservative than the worst case calibration. For a 95\% sensitivity interval for attributable effects generated by our procedure where $a^* >0$, the corresponding upper bound on the average case bias $\Gamma'$ is less than the corresponding upper bound on the worst case bias $\Gamma$. This occurs since we do not know which pairs contain attributable effects nor do we know each pair's particular exposure to hidden bias. Without any further assumptions, the best lower bound for $\Gamma'$ assumes that all the $a$ pairs with attributable effects have arbitrarily small probability of being on a cellphone in the hazard window and not the control window. This is expressed mathematically in equation \eqref{p-bar-lb} by setting $\underline{p}=0$. If $\Gamma'_{truth}<\Gamma_{truth}$ -- which is a reasonable assumption in most circumstances -- then using the average case calibration may still result in a less conservative analysis. However, if all case-crossover pairs are exposed to the same magnitude of bias such that $\Gamma'_{truth}=\Gamma_{truth}$ then we are guaranteed to be less conservative by using the worst case calibration. A reasonable solution would be to simply supply both the $\Gamma'$ and $\Gamma$ when reporting a sensitivity interval, as we do in Table \ref{attr-table}. The investigator may then present an argument based on subject matter expertise as to which calibration is likely to be less conservative.

\section{DISCUSSION} 
The theorem presented in \ref{avg-bin} can be thought of as an interpretive aid: For the same standard sensitivity analysis we now have an additional, often more natural, way to interpret the results. This new average case interpretation may also allow researchers to make use of additional information about the problem to empirically calibrate their sensitivity analysis. As we saw in Section \ref{subsec:inter}, we used the estimate of driver intermittency rate to determine an approximate lower bound on $\Gamma'_{truth}$, providing us with some empirical guidance when conducting our sensitivity analysis. In the worst case setting, such an empirical calibration would not be possible. The investigator performs a sensitivity analysis in anticipation of critics who might claim the association is due to some unobserved confounder. The average case analysis makes the protection that the sensitivity analysis provides against such criticism more robust. As the title of the article makes clear, the results we present are for binary data. As we illustrated in Section \ref{attr-sec}, the notion of attributable effects allows us to construct interpretable confidence intervals for binary outcomes. We show that our average case calibration can be extended to the sensitivity analysis of such confidence intervals and in most cases will yield a less conservative conclusions. It may then be interesting to apply the results here to the sensitivity analysis of displacement effects, the continuous analog of attributable effects for non-binary outcomes. \cite{rosenbaum2002a} show that displacement effects can be analyzed in the attributable effect framework for binary response, providing a potential avenue to extend average case calibrated sensitivity analysis to a study with non-binary outcomes. 

\section{Supplementary Materials}
Web Appendices A and B referenced in Section \ref{worst-avg} and the R code that produced the sensitivity analysis summarized in Table \ref{sens-table} in Section \ref{sens-examp}, the Monte Carlo simulation found in Web Table 1 in Web Appendix A, and the attributable effects analysis in Section \ref{attr-sec} are available with this paper at the Biometrics website on Wiley Online Library.

\section*{Acknowledgements} Raiden Hasegawa would like to thank Colin Fogarty for his insightful comments and feedback.\vspace*{-8pt}

\bibliographystyle{biom}
\bibliography{./biblio} 

\label{lastpage}

\end{document}